\documentclass[a4paper,reqno,10pt]{amsart}
\usepackage[latin1]{inputenc}
\usepackage[T1]{fontenc}

\usepackage{amssymb,amsmath}
\usepackage{setspace}
\usepackage[normalem]{ulem}
\usepackage{color}
\usepackage{times}
 \usepackage{natbib}
\usepackage[english]{babel}

\newtheorem{proposition}{Proposition}

\pagestyle{headings}

\begin{document}

\title[Are weighted games sufficiently good for binary voting?]{Are weighted games sufficiently good for binary voting?}

\author[Are weighted games sufficiently good for binary voting?]{Sascha Kurz\\ Department of Mathematics, University of Bayreuth, Germany.\\ Tel: +49 921 557353. E-mail: sascha.kurz@uni-bayreuth.de}  



\maketitle

\setstretch{1.2}


\noindent
\textsc{Abstract:}
Binary {\lq\lq}yes{\rq\rq}-{\lq\lq}no{\rq\rq} decisions in a legislative committee or a shareholder meeting are commonly 
modeled as a weighted game. However, there are noteworthy exceptions. E.g., the voting rules of the European Council according 
to the Treaty of Lisbon use a more complicated construction. Here we want to study the question if we lose much from a 
practical point of view, if we restrict ourselves to weighted games. To this end, we invoke power indices that measure the 
influence of a member in binary decision committees. More precisely, we compare the achievable power distributions of weighted 
games with those from a reasonable superset of weighted games. It turns out that the deviation is relatively small.

\medskip

\noindent
\textbf{JEL classification:} C61, C71

\medskip

\noindent
\textbf{Keywords: power measurement; weighted games} 


\section{Introduction}
\label{sec:intro}
\noindent
Consider a family, consisting of mother Ann, father Bob, and the two kids Cathrin and Dave, deciding on their joint weekend 
activities by binary voting. In a \emph{weighted game} each voter $i$ has a non-negative weight $w_i$ and a proposal is accepted if the sum of the weights 
of its supporters meets or exceeds a positive quota $q$. As an abbreviation we write $\left[q;w_1,\dots,w_n\right]$ for a weighted 
game with $n$ voters. The example $[3;3,2,1,1]$ (where we number in alphabetical order) might model a slightly parents-biased, especially mother-biased, decision rule. 
Another voting rule might be that either both parents or both kids have to agree. It can be shown that no representation as a 
weighted game exists. Since all family members have equal opportunities to influence the final decision, all reasonable 
measures of voting power assign equal power to all members. This is also true for other weighted games such as $[2;1,1,1,1]$ or $[3;1,1,1,1]$ (but 
not for $[3;3,2,1,1]$).  
If we only care about the resulting power distribution we can also choose 
a weighted game in our situation. Even more practically, we may accept a weighted game as a plausible replacement of the original 
voting rule if the corresponding power distribution does not differ too much. Here we want to study the question how large this 
difference can be in the worst case. Reasons for choosing only weighted games as decision rules is that they are easy to understand and to 
implement as well as quite common in practice. However, the more important point is the issue of representation complexity. For a weighted 
game with $n$ voters we just need $n+1$ numbers for the description while exponentially many coalitions have to be listed for binary decision rules 
in the worst case. 

A related problem is the so-called \emph{inverse power index problem}, where one wants to determine the game whose power distribution 
is closest to a predefined target power distribution. For more details see e.g.\ \cite{de2017inverse} and the references cited therein.  
\cite{alon2010inverse} have shown that some target power distributions, where most players have negligible or even zero 
power, like e.g.\ $(0.75,0.25,0,\dots,0)$, cannot be approximated too closely by the power distribution of any game.\footnote{More precisely, 
\cite{alon2010inverse} show such a result for the Banzhaf index. Results for other power indices have been obtained by \cite{kurz2016inverse}.} 
Our setting differs as follows. Instead of all non-negative vectors summing to one, we only consider the power distributions attained 
by a superset of weighted games as possible target power distributions and ask to what extent they can be approximated by the power 
distribution of a weighted game.   
 
\section{Preliminaries} 
\label{sec_preliminaries}
\noindent
By $N=\{1,\dots,n\}$ we denote the set of voters. A \emph{simple game} is a surjective and monotone mapping $v\colon 2^N\to\{0,1\}$ 
from the set of subsets of $N$ into a binary output $\{0,1\}$. \emph{Monotone} means $v(S)\le v(T)$ for all $\emptyset\subseteq S\subseteq T 
\subseteq N$. A simple game $v$ is weighted if there exist weights $w_1,\dots,w_n\in\mathbb{R}_{\ge 0}$ and a quota 
$q\in\mathbb{R}_{>0}$ such that $v(S)=1$ iff $w(S):=\sum_{i\in S} w_i\ge q$. As stated in the introduction, we abbreviate a weighted game 
by $\left[q;w_1,\dots,w_n\right]$. Two voters $i$ and $j$ are called \emph{symmetric}, in a given simple game $v$, if $v(S\cup\{i\})=v(S\cup\{j\})$ 
for all $\emptyset\subseteq S\subseteq N\backslash\{i,j\}$. Voter $i\in N$ is a \emph{null voter} if $v(S)=v(S\cup\{i\})$ for all 
$\emptyset\subseteq S\subseteq N\backslash\{i\}$. 

Given two simple games $v$ and $v'$ we define their \emph{intersection} or \emph{conjunction} $v\wedge v'$ via 
$(v\wedge v')(S)=\min\left\{v(S),v'(S)\right\}$ for all $S\subseteq N$. Similarly, the \emph{union} or \emph{disjunction} is given by $(v\vee v')(S)=\max\left\{v(S),v'(S)\right\}$ 
for all $S\subseteq N$. The non-weighted decision rule from the introduction can be written as $[2;2,0,1,1]\wedge[2;0,2,1,1]$ or $[2;1,1,0,0]\vee[2;0,0,1,1]$. 
It is well known, see e.g.\ \cite{taylor1999simple}, that every simple game can be written as the intersection (or union) of a finite list of weighted games. Also combinations 
of $\wedge$ and $\vee$ are used in practice. 

An example is given by the voting system of the European Council according to the Treaty of Lisbon. For $n=27$ 
(or $n=28$) countries the voting system can be written as $v=\left([0.55n;1,\dots,1]\wedge\left[0.65;p_1,\dots,p_{n}\right]\right)\vee [n-3;1,\dots,1]$, where 
$p_i$ denotes the relative population of country~$i$. As remarked by \cite{kirsch2011invariably}, dropping the union with $[n-3;1,\dots,1]$ has almost 
no impact on the characteristic function $v$ or corresponding power distributions. Consisting of a Boolean combination, i.e., $\wedge$'s and $\vee$'s, of 
three weighted games the stated representation of the voting system of the European Council (according to the Treaty of Lisbon) is relatively compact. For a
general simple game for $n$ voters an exponential number of weighted games can be necessary in the worst case, see \cite{faliszewski2009boolean}. Writing down 
the characteristic function $v$ explicitly also has exponential complexity, while a weighted game can be written by listing $n$ integer weights and a quota. 
Framed differently, the number of distinct simple games is many orders of magnitudes larger than the number of weighted games. 

As a class of binary voting systems  between simple games and weighted games we consider \emph{complete simple games}, see  \cite{carreras1996complete}. 
They are based on \emph{Isbell's desirability relation}, see \cite{isbell1956class}, where we write $i\succeq j$ if $v(S\cup\{i\})\ge v(S\cup\{j\})$ for all $S\subseteq N\backslash\{i,j\}$ for 
two voters $i,j\in N$. A simple game $v$ is called complete if this relation is complete, i.e., if for all $i,j\in N$ we have $i\succeq j$ or $j\succeq i$. Two players 
$i,j\in N$ are symmetric iff $i\succeq j$ and $j\succeq i$. The relation $\succeq$ induces an ordering of the players, which is satisfied in many practical applications. 
E.g.\ the voting systems of the European Council (according to the Treaty of Lisbon and also those before) are complete simple games. Here we use the standard assumption 
$1\succeq 2\succeq \dots\succeq n$ and note that Shapley-Shubik index $\operatorname{SSI}(v)$ and the Penrose-Banzhaf index $\operatorname{PBI}(v)$, 
see the definitions below, are non-increasing vectors for every complete simple game $v$.  
In order to uniquely characterize a complete simple game $v$ we can list all subsets $S\subseteq N$ such that $v(S)=1$ and for every $i\in S$, $j\notin S$ with $i<j$ (using 
the usual ordering in $\mathbb{N}$)  
we have $v(S\backslash\{i\}\cup\{j\})=0$. For our example $[3;3,2,1,1]$ those (so-called {\lq}shift-minimal winning{\rq}) subsets are given by $\{1\}$ and $\{2,4\}$. 
In our example $[2;2,0,1,1]\wedge[2;0,2,1,1]$ the 
voters $1$ and $2$ as well as voters $3$ and $4$ are symmetric. For all other pairs of different voters we neither have $i\succeq j$ nor $j\succeq i$, i.e., the game 
is not complete. 

A power index $p$ is a mapping from the set of simple (or weighted) games on $n$ voters into $\mathbb{R}^n$. By $p_i(v)$ we denote the $i$th component of 
$p(v)$, i.e., the power of voter~$i$. Here we consider two of the most commonly used power indices, i.e., the 
\emph{Shapley-Shubik index}, see \cite{shapley1954method}, 
$$
  \operatorname{SSI}_i(v)=\sum_{S\subseteq N\backslash\{i\}} \frac{|S|!\cdot(n-|S|-1)!}{n!}\cdot\left(v(S\cup\{i\})-v(S)\right)
$$
and the \emph{Penrose-Banzhaf index}, see \cite{penrose1946elementary,banzhaf1964weighted},
$$
  \operatorname{PBI}_i(v)=\frac{\sum_{S\subseteq N\backslash\{i\}}\left(v(S\cup\{i\})-v(S)\right)}{\sum_{j\in N} \sum_{S\subseteq N\backslash\{j\}}\left(v(S\cup\{j\})-v(S)\right)}.  
$$
For our first example $v=[3;3,2,1,1]$ we have $$\operatorname{SSI}(v)=\tfrac{1}{12}\cdot (7,3,1,1)\approx(0.5833,0.25,0.0833,0.0833)$$ and $$\operatorname{PBI}(v)=
\tfrac{1}{10}\cdot (5,3,1,1)=(0.5,0.3,0.1,0.1).$$ As a measure for the distance between two different power distributions $x,y\in\mathbb{R}^i$ we use the \emph{Manhattan 
distance} $d_1(x,y)=\sum_{i=1}^n \left|x_i-y_i\right|$ and the \emph{Chebyshev distance} $d_{\infty}(x,y)=\max_{1\le i\le n} \left| x_i-y_i\right|$. For the above two 
power distributions the Manhattan distance is $\tfrac{1}{6}\approx 0.1667$ and the Chebyshev distance is $\tfrac{1}{12}\approx 0.0833$.

\section{Results}
\label{sec_main}
\noindent
In this section we want to present some numerical results obtained by exhaustive enumeration. In order to shed some light on the different classes of binary voting rules we mention that 
up to $3$ voters all simple games are weighted. For $3$ voters the $8$ different weighted games are given by $[1;1,0,0]$, $[1;1,1,0]$, $[2;1,1,0]$, $[1;1,1,1]$, $[2;1,1,1]$, 
$[3;1,1,1]$, $[2;2,1,1]$, and $[3;2,1,1]$. Enumerations of simple games with up to $4$ voters, together with their corresponding Shapley-Shubik and Banzhaf indices, can e.g.\ be found 
in \cite{straffin1983power}. We remark that there are $28$ different simple games with $4$ voters, where $25$ are weighted. The three non-weighted simple games are characterized by their 
set of minimal winning coalitions $\big\{\{1,2\},\{3,4\}\big\}$, $\big\{\{1,2\},\{1,4\},\{3,4\}\big\}$, and $\big\{\{1,2\},\{1,4\},\{2,3\},\{3,4\}\big\}$, respectively.

In the introduction we have noticed that $[2;1,1,1,1]$ as well as $[3;1,1,1,1]$ yield the power distribution $(0.25,0.25,0.25,0.25)$ both for the 
Shapley-Shubik and the Banzhaf indices, i.e., the power distributions of different games can coincide. In Table~\ref{table_num_power_distributions_weighted} we state the number 
of different power distributions for the Shapley-Shubik and the Banzhaf indices that are attained by weighted games with $n\le 8$ voters. The corresponding numbers for complete 
simple games are listed in Table~\ref{table_num_power_distributions_complete}.  

\begin{table}[htp!]
  \begin{center}
    \begin{tabular}{lrrrrrr}
      \hline
      $n$ & 3 & 4 & 5 & 6 & 7 & 8 \\
      \hline
      $\#\operatorname{SSI}$ & 4 & 11 & 53 & 536 & 14188 & 1364907 \\    
      $\#\operatorname{PBI}$ & 4 & 12 & 57 & 555 & 14720 & 1366032 \\ 
      \hline
    \end{tabular}
    \caption{Number of different vectors $\operatorname{SSI}(v)$ and $\operatorname{PBI}(v)$ for weighted games $v$ with $n$ voters.}
    \label{table_num_power_distributions_weighted} 
  \end{center}
\end{table}

\begin{table}[htp!]
  \begin{center}
    \begin{tabular}{lrrrrrr}
      \hline
      $n$ & 3 & 4 & 5 & 6 & 7 & 8 \\
      \hline
      $\#\operatorname{SSI}$ & 4 & 11 & 53 & 536 & 17973 & 6314952 \\    
      $\#\operatorname{PBI}$ & 4 & 12 & 57 & 555 & 18600 & 4616157 \\ 
      \hline
    \end{tabular}
    \caption{Number of different vectors $\operatorname{SSI}(v)$ and $\operatorname{PBI}(v)$ for complete simple games $v$ with $n$ voters.}
    \label{table_num_power_distributions_complete}
  \end{center}
\end{table}      

We observe that the counts coincide for $n\le 6$, which is no surprise for $n\le 5$, since every complete simple game consisting of at most $5$ voters 
is weighted. However, for $n=6$ voters there exist $1171-1111=60$ complete simple games that are not weighted. Nevertheless, the power distributions according 
to the Shapley-Shubik index or the Banzhaf index of these 60 non-weighted complete simple games are also exactly attained by weighted games, respectively. 

As an example of a non-weighted complete simple game we consider two sets of voters $A=\{1,2\}$ and $B=\{3,4,5,6\}$ such that a coalition $S$ is winning 
iff $|S|\ge 3$ and $|S\cap A|\ge 1$. A representation is given by $[3;1,1,1,1,1,1]\wedge [1;1,1,0,0,0,0]$. 

If we are only interested in the resulting power distribution, then including complete non-weighted games comes with no benefit for $n=6$ voters. For $n\in\{7,8\}$ we 
do not have such a strong result. Here the number of attained power distributions for complete simple games is significantly larger. This goes in line with the fact 
that there are $44\,313-29\,373=14\,940$ and $16\,175\,188-2\,730\,164=13\,445\,024$ non-weighted complete simple games for $n=7$ and $n=8$ voters, respectively. There we can only 
give a worst-case bound for the minimum distance between the power distribution of a complete simple game and a weighted game. To this end, we denote the 
set of weighted games with $n$ voters by $\mathcal{WG}(n)$ and the set of complete simple games with $n$ voters by $\mathcal{CG}(n)$. Moreover, let
$$
  \omega_{a}^p(n):=\max\left\{\min\left\{ d_a(p(c),p(v)) \,:\,v\in\mathcal{WG}(n) \right\}\,:\, c\in \mathcal{CG}(n)\right\},
$$
where $a\in\{1,\infty\}$ and $p\in\{\operatorname{SSI},\operatorname{PBI}\}$, be the worst-case distance between the power distribution $p(c)$ of a 
complete simple game $c$ and the power distribution $p(v)$ of its best approximation by a weighted game $v$.

\begin{proposition}
  \label{prop_exact}
  \begin{eqnarray*}
    \omega_1^{\operatorname{SSI}}(7) \,=\,0.0666667 && \omega_1^{\operatorname{SSI}}(8) \,=\, 0.0666667\\
    \omega_\infty^{\operatorname{SSI}}(7) \,=\,0.0166667 && \omega_\infty^{\operatorname{SSI}}(8) \,=\, 0.0154762\\
    \omega_1^{\operatorname{PBI}}(7) \,=\,0.0599700 && \omega_1^{\operatorname{PBI}}(8) \,=\,0.0567084 \\
    \omega_\infty^{\operatorname{PBI}}(7) \,=\,0.0173913 && \omega_\infty^{\operatorname{PBI}}(8) \,=\,0.0139124 \\
  \end{eqnarray*}
\end{proposition}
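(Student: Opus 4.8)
The plan is to establish this proposition by \textbf{exhaustive enumeration}, exactly as the preceding discussion signals (``numerical results obtained by exhaustive enumeration''). The quantity $\omega_a^p(n)$ is defined as a max–min over two finite sets, so each value is in principle computable exactly; the entire content of the proposition is the correctness and feasibility of carrying out this computation for $n\in\{7,8\}$. First I would generate complete lists of representatives of $\mathcal{WG}(n)$ and $\mathcal{CG}(n)$ up to isomorphism (relabeling of voters), since power indices are invariant under permutations of symmetric voters and, more importantly, the standard ordering $1\succeq 2\succeq\dots\succeq n$ for complete games gives a canonical form. The cardinalities needed are already implicit in the excerpt: the counts $29\,373$ and $44\,313$ of complete games, and $2\,730\,164$ resp.\ $16\,175\,188$ of weighted games, for $n=7,8$, are the sizes of the search spaces, so the enumeration is the crux and its feasibility must be argued.

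Next I would, for each game in both lists, compute the power vectors $\operatorname{SSI}(v)$ and $\operatorname{PBI}(v)$ directly from the definitions given in the Preliminaries. Because every complete simple game has non-increasing power vector under the standard ordering, I may restrict attention to weighted games also presented with non-increasing weights and compare the resulting sorted vectors componentwise; this both reduces the candidate set and makes the distances $d_1$ and $d_\infty$ meaningful without worrying about how to align the two voter labelings. For each complete game $c$ I would then compute
\begin{equation*}
  \delta_a^p(c)\;=\;\min\bigl\{\,d_a(p(c),p(v)) : v\in\mathcal{WG}(n)\,\bigr\},
\end{equation*}
i.e.\ the distance from $p(c)$ to its nearest weighted-game power distribution, and finally take the maximum of $\delta_a^p(c)$ over all $c\in\mathcal{CG}(n)$ to obtain $\omega_a^p(n)$. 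Each of the eight displayed numbers is one such max–min, for the appropriate choice of $a\in\{1,\infty\}$ and $p\in\{\operatorname{SSI},\operatorname{PBI}\}$.

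Two simplifications make this tractable and worth stating explicitly. First, only the complete games that are \emph{not} weighted can contribute a positive value to the outer maximum, since a weighted complete game approximates itself with distance $0$; thus the effective search for the maximizing $c$ ranges only over the $14\,940$ (for $n=7$) and $13\,445\,024$ (for $n=8$) non-weighted complete games. Second, rather than comparing each such $c$ against every weighted game, I would precompute and store the finite \emph{set} of attained power vectors of weighted games (the counts $14\,188$, $1\,364\,907$, etc.\ from Table~\ref{table_num_power_distributions_weighted}), so that the inner minimization is a nearest-neighbour query against this comparatively small stored set. This reduces the nominal cost from a product of the two large counts to something proportional to their sum times the number of distinct target vectors.

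The main obstacle is \textbf{certifying exactness of the reported decimals}, i.e.\ ruling out floating-point artefacts in a computation over tens of millions of games. I would address this by carrying out all arithmetic in \emph{exact rational arithmetic}: the $\operatorname{SSI}$ and $\operatorname{PBI}$ values are rationals with controlled denominators ($n!$ and an even integer, respectively), and the distances $d_1,d_\infty$ are then rationals as well, so the optimal value is attained exactly and the displayed seven-digit decimals are truncations of a rational one can exhibit. A secondary concern is merely engineering—memory and time for the $n=8$ case with over sixteen million weighted games and thirteen million non-weighted complete games—but the canonical-form normalization together with the nearest-neighbour reformulation keeps this within reach of a direct computation, and the proof amounts to asserting that this enumeration was performed and returned the stated values.
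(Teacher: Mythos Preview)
Your approach is essentially identical to the paper's proof: a computer enumeration that precomputes the power vectors of all weighted games, then for each complete simple game performs a nearest-neighbour search (the paper uses a $k$-d-tree for this) and tracks the worst case. One minor slip: you have the cardinalities of $\mathcal{WG}(n)$ and $\mathcal{CG}(n)$ interchanged (for instance $29\,373=\#\mathcal{WG}(7)$ and $44\,313=\#\mathcal{CG}(7)$), though this does not affect the argument.
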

\begin{proof}
  The proof is obtained by a computer enumeration. First, we loop over all elements $v$ in $\mathcal{WG}(n)$ and store the corresponding power distributions $p(v)$ 
  in a $k$-d-tree (a data structure for storing multi-dimensional geometrical data). Afterwords, we loop over all elements $c$ in $\mathcal{CG}(n)$, compute $p(c)$, and 
  perform a nearest neighbor search within the previously computed $k$-d-tree. Let $v$ denote the nearest neighbor that minimizes $d_a^p(p(v),p(c))$. Eventually 
  update the worst-case distance with $d_a^p(p(v),p(c))$.   
\end{proof}

As an example we state that the complete simple game attaining $\omega_\infty^{\operatorname{PBI}}(7)=0.0173913$ is uniquely characterized by the subsets   
$\{3,4,5,6,7\}$, $\{2,3,5,6\}$, and $\{1,3,7\}$. For $n=8$ the extremal complete simple games all contain a unique null voter. We remark that the same enumeration 
is computationally infeasible for $n=9$ voters since the numbers $\#\mathcal{WG}(9)=993\,061\,482$ and $\#\mathcal{CG}(9)=284\,432\,730\,174$ are quite large. (See e.g.\ \cite{kartak2015minimal} 
and \cite{freixas2010weighted} for the details.) So, for $n\ge 9$ we can only state lower bounds for $\omega_a^p(n)$:
\begin{proposition}
  \label{prop_lower}
  \begin{eqnarray*}
    \omega_1^{\operatorname{SSI}}(9)      \,\ge\,0.0634922 & \omega_1^{\operatorname{SSI}}(10)      \,\ge\, 0.0634922 & \omega_1^{\operatorname{SSI}}(11)      \,\ge\, 0.0591627 \\
    \omega_\infty^{\operatorname{SSI}}(9) \,\ge\,0.0130953 & \omega_\infty^{\operatorname{SSI}}(10) \,\ge\, 0.0123016 & \omega_\infty^{\operatorname{SSI}}(11) \,\ge\, 0.0109308\\
    \omega_1^{\operatorname{PBI}}(9)      \,\ge\,0.0562 & \omega_1^{\operatorname{PBI}}(10) \,\ge\, 0.0552 & \omega_1^{\operatorname{PBI}}(11) \,\ge\, 0.0552 \\
    \omega_\infty^{\operatorname{PBI}}(9) \,\ge\,0.0110 & \omega_\infty^{\operatorname{PBI}}(10) \,\ge\, 0.0106 & \omega_\infty^{\operatorname{PBI}}(11) \,\ge\, 0.0100 \\
  \end{eqnarray*}
\end{proposition}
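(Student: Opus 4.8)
The plan is to certify each inequality $\omega_a^p(n)\ge L$ by exhibiting a single complete simple game $c^{\ast}\in\mathcal{CG}(n)$ for which $\min\{d_a(p(c^{\ast}),p(v)):v\in\mathcal{WG}(n)\}\ge L$. Since $\omega_a^p(n)$ is defined as a maximum over all of $\mathcal{CG}(n)$, one such witness already forces $\omega_a^p(n)\ge L$, so the infeasible enumeration of $\mathcal{CG}(n)$ is replaced by a targeted search for good witnesses. The catch is that the inner minimisation over $\mathcal{WG}(n)$ must then be performed exactly, or at least validly lower-bounded, over the whole set of weighted games: exhibiting merely some nearby weighted game would only upper-bound the inner minimum and would certify nothing.

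First I would generate candidate witnesses from two sources. One is to lift the extremal games of Proposition~\ref{prop_exact} to more voters by adjoining null voters; checking the definitions of $\operatorname{SSI}$ and $\operatorname{PBI}$ directly, adding a null voter appends a zero component to the power vector and leaves every other component unchanged (for $\operatorname{PBI}$ all raw swing counts simply double, so the normalised index is preserved). The other is local search in the lattice of complete simple games, which I would represent by their shift-minimal winning coalitions, perturbing a current champion and retaining any move that increases the distance to the nearest weighted game. For each surviving candidate I would compute $p(c^{\ast})$ in exact rational arithmetic, which is what keeps the tabulated $\operatorname{SSI}$ bounds exact rationals and explains the coincidence $\omega_1^{\operatorname{SSI}}(9)=\omega_1^{\operatorname{SSI}}(10)$.

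The decisive step is evaluating the exact distance from a fixed witness to the nearest weighted game. Because $c^{\ast}$ is complete I may assume $1\succeq 2\succeq\dots\succeq n$, so $p(c^{\ast})$ is non-increasing and a closest weighted game can be taken with non-increasing weights; this lets one compare sorted vectors throughout. Any weighted game $v$ with $d_a(p(c^{\ast}),p(v))<L$ has $p(v)$ inside the radius-$L$ ball about $p(c^{\ast})$, and I would use this metric constraint, together with monotonicity and the desirability preorder, to prune the admissible weight vectors and combinatorial types down to a family small enough to enumerate and verify exhaustively. For $n=9$, where $\#\mathcal{WG}(9)\approx 10^{9}$, this pruning is unnecessary: a streaming nearest-neighbour pass over all weighted games (each power vector computed by dynamic programming, never all stored at once) is already within reach, and taking the maximum certified $L$ over the witnesses yields the bound.

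The main obstacle is exactly this inner certification for $n\in\{10,11\}$, where $\#\mathcal{WG}(n)$ far exceeds $10^{12}$ and no complete pass is possible. Here one must \emph{guarantee} that the pruned family contains every weighted game within distance $L$ of $p(c^{\ast})$, i.e.\ solve an exact instance of the computationally hard inverse power index problem rather than merely locate a convenient nearby game. If a rigorous pruning bound proves elusive, a safe fallback giving a valid but possibly weaker $L$ is to enlarge $\mathcal{WG}(n)$ to a tractable superset of its attainable power vectors, described by necessary linear inequalities that every weighted game's $\operatorname{SSI}$ or $\operatorname{PBI}$ vector satisfies, and to minimise $d_a$ from $p(c^{\ast})$ over that relaxation; the resulting optimum lower-bounds the true inner minimum and hence still certifies $\omega_a^p(n)\ge L$.
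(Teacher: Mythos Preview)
Your overall strategy coincides with the paper's: fix a witness $c^{\ast}\in\mathcal{CG}(n)$ (the paper likewise uses exactly the extremal games of Proposition~\ref{prop_exact} padded with null voters, and nothing else), compute $p(c^{\ast})$, and then determine the inner minimum $\min_{v\in\mathcal{WG}(n)} d_a(p(c^{\ast}),p(v))$ exactly. Your streaming pass over $\mathcal{WG}(9)$ would indeed work for $n=9$. The gap is precisely where you locate it yourself: the inner certification for $n\in\{10,11\}$.

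The paper does not enumerate or prune $\mathcal{WG}(n)$ at all. It invokes the integer linear programming formulation of the inverse power index problem from \cite{kurz2012inverse} (for $\operatorname{SSI}$, a single ILP with the target vector $p(c^{\ast})$) and from \cite[Appendix~A]{kurz2014heuristic} (for $\operatorname{PBI}$, a feasibility ILP parametrised by a distance threshold $\delta$, solved to the exact optimum by bisection on $\delta$). These ILPs encode weightedness directly via integer weight variables and a quota, so an off-the-shelf solver returns the true minimum over all of $\mathcal{WG}(n)$ without ever knowing how many weighted games exist; this is exactly what makes $n=10,11$ tractable. Your pruning scheme remains a hope rather than a method, and your relaxation fallback---minimising $d_a$ over a polyhedral superset of the attainable power vectors---is logically valid but would in general deliver a strictly smaller $L$; there is no reason to expect it to recover the specific constants $0.0591627$ or $0.0100$. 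So the missing ingredient is the ILP machinery: with it the proof is exactly the plan you outline; without it the $n\ge 10$ bounds are not established.
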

\begin{proof}
  Let $a\in\{1,\infty\}$ and $p\in\{\operatorname{SSI},\operatorname{PBI}\}$. In \cite{kurz2012inverse} the inverse power index problem for the Shapley-Shubik index 
  with respect to the Manhattan distance $d_1(\cdot,\cdot)$ and the Chebyshev distance $d_\infty(\cdot,\cdot)$ within the class of weighted, complete simple, or simple games  
  was formulated as an integer linear programming (ILP) problem, which can be solved exactly even for $n>9$, where the number of weighted games is unknown. For the Banzhaf index 
  the problem whether a solution  
  of the inverse power index problem with distance at most $\delta$ exists can be formulated as an ILP. Using the bisection method for $\delta$ the problem 
  can be solved exactly by a sequence of ILPs, see \cite[Appendix A]{kurz2014heuristic} for the details. Thus, given a complete simple game $c$ with $n$ voters we can compute 
  the corresponding power distribution $p(c)$ and exactly solve the inverse power index problem within $\mathcal{WG}(n)$. If $v$ is a weighted game 
  that minimizes $d_a(p(c),p(v))$, then $d_a(p(c),p(v))$ is a lower bound for $\omega_a^p(n)$. As heuristic candidates for the complete simple game $c$ we have used the 
  extremal ones of Proposition~\ref{prop_exact} and added a suitable number of null voters.  
\end{proof}
We remark that we have also tried to use some randomly chosen complete simple games for $c$ in Proposition~\ref{prop_lower}. However, the resulting lower bounds 
for $\omega_a^p(n)$ are rather small. As an example, the value $\omega_1^{\operatorname{SSI}}(7)=0.0666667$ is attained by the complete simple game $c$ characterized  
by the subsets $\{4,5,6,7\}$, $\{2,4\}$, and $\{1\}$. If we add a null voter, the Shapley-Shubik index is given by $$(0.5024,0.1857,0.1024,0.1024,0.03571,0.03571,0.03571,0)$$ 
with best possible approximation $[84;38,27,19,16,9,9,3,0]$, which also shows $\omega_1^{\operatorname{SSI}}(8)\ge 0.0666667$.

For the voting system $c$ of the European Council according to the Lisbon Treaty we cannot solve the inverse power index problem exactly. However, for all 
$a\in \{1,\infty\}$ and all $p\in\{\operatorname{SSI},\operatorname{PBI}\}$ we can find a weighted game $v$ with $d_a(p(c),p(v))<10^{-5}$, which goes in line 
with the computational experiments in \cite{kurz2014heuristic}.   

\vspace*{-3mm}

\section{Conclusion}
\label{sec_conclusion}
\noindent
Does it pay off to use complete simple games instead of weighted games as binary voting systems? If only the resulting power distributions for the Shapley-Shubik or the Banzhaf 
index are relevant, then the answer is probably no. Whether the worst-case deviations stated in Proposition~\ref{prop_exact} can be regarded as negligible might depend on the 
application. For $n>8$ voters our computational experiments suggest that the worst-case deviations might even go down with an increasing number of voters. Proving this claim 
rigorously might be a hard technical challenge. 
  
We have chosen complete simple games as a reasonable superset of weighted games since the underlying ordering of the players can be assumed in many applications. Another reason is 
that the class of simple games is really large\footnote{There are at least $2^{\left(\sqrt{\frac{2}{3}\pi}\cdot 2^n\right)/\left(n\sqrt{n}\right)}$ complete simple games, see \cite{peled1985polynomial}, less than $2^{2^n}$ simple games, and at most $2^{n^2-n+1}$ weighted games, see \cite{zunic2004encoding}.} and realizes a lot of power distributions. E.g., the parameterized target power distribution 
$\beta(n)=\tfrac{1}{2n-1}\cdot (2,\dots,2,1)\in\mathbb{R}^n$ has been studied by \cite{kurz2014heuristic}. For $6\le n\le 18$ there exists a simple game $v_n$ such that 
$\operatorname{SSI}(v_n)=\beta(n)$, while the best approximation within $\mathcal{WG}(n)$ seems to have a deviation of order $\Theta(\tfrac{1}{n})$. At the very least our values 
for $\omega_a^p(n)$ give a lower bound for the corresponding situation where we enlarge the possible target power distributions to those of simple games.


\end{document}